\journalname{Brazilian Journal of Physics}
\begin{document}

\title{Distribution of Mutual Information in Multipartite States}

\author{Jonas Maziero}

\institute{Jonas Maziero \at Universidade Federal do Pampa, Campus Bag\'e, 96413-170, Bag\'e, RS, Brazil \\ 
\email{jonas.maziero@ufsm.br}
\\ \emph{Present address:} Departamento de F\'isica, Universidade Federal de Santa Maria, 97105-900, Santa Maria, RS, Brazil}

\date{Received: date / Accepted: date}

\maketitle

\begin{abstract}
Using the relative entropy of total correlation, we derive an
expression relating the mutual information of $n$-partite pure states
to the sum of the mutual informations and entropies of its marginals
and analyze some of its implications. Besides, by utilizing the extended
strong subadditivity of von Neumann entropy, we obtain generalized
monogamy relations for the total correlation in three-partite mixed
states. These inequalities lead to a tight lower bound for this correlation
in terms of the sum of the bipartite mutual informations. We use this
bound to propose a measure for residual three-partite total correlation
and discuss the non-applicability of this kind of quantifier to measure
genuine multiparty correlations.

\keywords{Distribution of multipartite correlations \and Relative entropy of total correlation \and Generalized monogamy relations 
\and Residual correlations}

\end{abstract}


\section{Introduction}
\label{sec1}

The correlations among the parts constituents of a system have been
at the central stage of discussions regarding fundamental concepts
of quantum physics since nearly a decade after its formulation \cite{EPR,Schcat}.
In quantum information science, the quantum part of correlations is
believed to be one of the main factors responsible by the so called
quantum advantage \cite{Jozsa-Linden,Datta-DQC1,Walther,Retamal,Madhok-ijmpb,Escher-bjp}.
However, classical correlation has also proven worthy of investigation.
For example, the derivative of bipartite classical correlation can
be used to indicate critical points of quantum phase transition \cite{Maziero-xy-pra,Sarandy-qpt-cc}.
Besides, the sudden change phenomenon of the classical correlation
between two qubits during its decoherent dynamics \cite{Maziero-SC,Auccaise-SC}
was shown to characterize the emergence of the pointer bases in a
quantum measurement process \cite{Cornelio-PBS}. 

With respect to multipartite systems, researches concerning its correlations
are important both from the practical point of view (e.g. because
of large scale implementations of protocols in quantum information
science) and also for the foundations of physics (e.g. understanding
the rising of collective behavior is essential in investigations of
quantum and classical phase transitions). The structure of the correlations presented 
in general multiparty states has been investigated using different techniques in 
Refs. \cite{LindenPW,LindenW,MBerry,Cavalcanti,Zhou,Walk,Parashar}.
In this article, considering finite dimensional systems and using relative entropy-based 
measures of correlations, we address some instances of the problem
of distribution of total correlation---which encompasses both the
classical and quantum ones---in multi-particle systems.

In what concern the quantification of total correlation, for bipartite
states a well justified (both physically \cite{Groisman-QMI} and
operationally \cite{Schumacher-QMI}) measure for total correlation
is obtained via a direct generalization of Shannon's classical mutual
information \cite{Shannon-Weaver}. This quantifier is dubbed quantum
mutual information and is defined as: 
\begin{equation}
I(\rho_{12})=S(\rho_{1})+S(\rho_{2})-S(\rho_{12}),
\end{equation}
with $\rho_{s}$ being the density operator (i.e., $\rho_{s}\ge0$
and $\mathrm{tr}(\rho_{s})=1$) on the Hilbert's space $\mathcal{H}_{s}$
of system $s$ ($\rho_{s}\in\mathcal{D}(\mathcal{H}_{s})$) and $S(\rho_{s})=-\mathrm{tr}(\rho_{s}\log_{2}\rho_{s})$
being its von Neumann's entropy. Above and hereafter $\rho_{s}=\mathrm{tr}_{s'}(\rho_{ss'})$
is the reduced state of subsystem $s$, obtained by tracing out the
other parties $s'$ of the whole system. 

On the other hand, for multipartite states the situation is less understood.
Venn's diagram-based approaches may lead to negative measures of correlation
\cite{Vedral-RE-RMP,Cover-Thomas}. One quantifier free from this
problem was introduced in Ref. \cite{Herbut-JPA} as follows:
\begin{equation}
I(\rho_{1\cdots n})=\sum_{s=1}^{n}S(\rho_{s})-S(\rho_{1\cdots n}).\label{eq:mQMI}
\end{equation}

In Ref. \cite{Modi-RE}, this issue was addressed by quantifying
the total correlation in a multipartite state by how distinguishable it is from uncorrelated (product) states. Using the quantum
relative entropy (QRE) \cite{Vedral-RE-RMP,Schumacher-Review,Wehrl},
\begin{equation}
S(\rho||\sigma)=\mathrm{tr}(\rho\log_{2}\rho)-\mathrm{tr}(\rho\log_{2}\sigma),
\label{eq:QRE}
\end{equation}
to measure distinguishability between any pair of quantum states $\rho$ and
$\sigma$, Modi \emph{et al.} showed that the closest (less distinguishable) \cite{QRE} product state
from any density operator is given by the states of its marginals
in the product form. The multipartite mutual information
defined in this way is called \emph{relative entropy of total correlation}
(RETC) and is given as in Eq. (\ref{eq:mQMI}). 

Here we give a simple, alternative, proof for the result obtained by Modi \emph{et al.} in Ref. \cite{Modi-RE}. 
\begin{proposition}
The closest product state of any multipartite state $\rho_{1\cdots n}$ is obtained from its marginals in the product form.
\end{proposition}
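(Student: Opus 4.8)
The plan is to minimize the quantum relative entropy $S(\rho_{1\cdots n}||\pi)$ directly over the set of fully product states $\pi=\pi_1\otimes\cdots\otimes\pi_n$ and to identify the minimizer as the product of the marginals. First I would rewrite the definition (\ref{eq:QRE}) in the form $S(\rho||\sigma)=-S(\rho)-\mathrm{tr}(\rho\log_2\sigma)$. The one nontrivial input is the behaviour of the logarithm on a tensor product: since the factors commute, $\log_2(\pi_1\otimes\cdots\otimes\pi_n)=\sum_{i=1}^n\log_2\pi_i$, where each $\log_2\pi_i$ is understood to act on the $i$-th tensor factor and as the identity on the others. Taking the trace against $\rho_{1\cdots n}$ and carrying out the partial traces over the complementary subsystems then gives $\mathrm{tr}(\rho_{1\cdots n}\log_2\pi)=\sum_{i=1}^n\mathrm{tr}(\rho_i\log_2\pi_i)$, the marginals $\rho_i$ appearing automatically.

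Next I would insert $0=\sum_{i=1}^n\mathrm{tr}(\rho_i\log_2\rho_i)-\sum_{i=1}^n\mathrm{tr}(\rho_i\log_2\rho_i)$ and regroup, using $\sum_{i=1}^n\mathrm{tr}(\rho_i\log_2\rho_i)=-\sum_{i=1}^n S(\rho_i)$, to arrive at the identity
\begin{equation}
S(\rho_{1\cdots n}||\pi_1\otimes\cdots\otimes\pi_n)=I(\rho_{1\cdots n})+\sum_{i=1}^n S(\rho_i||\pi_i),
\end{equation}
with $I(\rho_{1\cdots n})$ as in (\ref{eq:mQMI}), a quantity that does not depend on $\pi$. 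Because each $S(\rho_i||\pi_i)\ge0$ by Klein's inequality (the non-negativity of the relative entropy), with equality iff $\pi_i=\rho_i$, the right-hand side is minimized precisely at $\pi_i=\rho_i$ for every $i$, and its minimum value is $I(\rho_{1\cdots n})$. This simultaneously shows that the product of the marginals is the (unique) closest product state and recovers the RETC formula.

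The computation is short, and the only steps deserving attention are bookkeeping ones: the commuting-factor argument for the tensor logarithm and the tracking of partial traces, which I expect to be the main (mild) obstacle. One should also restrict the optimization to product states whose $i$-th factor $\pi_i$ has support containing $\mathrm{supp}\,\rho_i$; outside this set $S(\rho_i||\pi_i)=+\infty$, so such states are trivially non-optimal and the identity above still governs the minimization. Uniqueness of the minimizer then follows from the strict positivity of the relative entropy away from equality.
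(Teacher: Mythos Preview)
Your proposal is correct and follows essentially the same route as the paper: the paper's proof also rests on the identity $S(\rho_{1\cdots n}\|\bigotimes_i\pi_i)=S(\rho_{1\cdots n}\|\bigotimes_i\rho_i)+\sum_i S(\rho_i\|\pi_i)$ (their Eq.~(\ref{QRE-equality}), derived in Appendix~A via the same two ingredients you single out---the additivity of $\log_2$ on tensor products and the reduction of the resulting traces to the marginals), and then invokes non-negativity of the relative entropy. Your additional remarks on support conditions and uniqueness are welcome refinements that the paper leaves implicit.
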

\begin{proof}
Let $\bigotimes_{s=1}^{n}\sigma_{s}$ be any $n$-partite product state. Using the definition of QRE in Eq. (\ref{eq:QRE}), one can write 
(we postpone the proof of this equality to Appendix A):
\begin{equation}
S(\rho_{1\cdots n}||{\bigotimes_{s=1}^{n}}\sigma_{s})=S(\rho_{1\cdots n}||{\bigotimes_{s=1}^{n}}\rho_{s})
+{\sum_{s=1}^{n}}S(\rho_{s}||\sigma_{s}). 
\label{QRE-equality}
\end{equation}
As $S(\rho||\sigma)\ge0$ with equality if and only if $\rho=\sigma$,
we see that
\begin{equation}
S(\rho_{1\cdots n}||\bigotimes_{s=1}^{n}\sigma_{s})\ge S(\rho_{1\cdots n}||\bigotimes_{s=1}^{n}\rho_{s}),
\end{equation}
with equality obtained only if $\sum_{s=1}^{n}S(\rho_{s}||\sigma_{s})=0$,
i.e., if $\rho_{s}=\sigma_{s}\mbox{ }\forall s$. Therefore
\begin{eqnarray}
I(\rho_{1\cdots n})&=&\min_{\bigotimes_{s=1}^{n}\sigma_{s}} S(\rho_{1\cdots n}||{\textstyle \bigotimes_{s=1}^{n}}\sigma_{s}) \nonumber \\
&=&S(\rho_{1\cdots n}||{\textstyle \bigotimes_{s=1}^{n}}\rho_{s}),
\end{eqnarray}
concluding thus the proof of the proposition. \qed
\end{proof}

In the subsequent sections we shall regard the distribution of the
RETC in $n$-partite pure (Sec. \ref{sec2}) and three-partite mixed
(Sec. \ref{sec3}) states.

\section{Distribution of mutual information in $n$-partite pure states}
\label{sec2}

Let us consider the case of a system with $n$ parties in a pure state
$|\psi_{1\cdots n}\rangle\in\mathcal{H}_{1}\otimes\cdots\otimes\mathcal{H}_{n}$,
with $\mathcal{H}_{s}$ being the state space for the subsystem $s$
($s=1,\cdots,n$). Using the definition presented in the Sec. \ref{sec1}
and noting that the uncertainty associated with the preparation of
a pure state is null ($S(|\psi_{1\cdots n}\rangle)=0$), one see that
the total mutual information of $n$ subsystems in a pure state $|\psi_{1\cdots n}\rangle$
is given by:
\begin{equation}
I(|\psi_{1\cdots n}\rangle)={\textstyle \sum_{s=1}^{n}}S(\rho_{s}).\label{eq:InPS}
\end{equation}
Below this correlation is related to the mutual informations and entropies
of the marginals of $|\psi_{1\cdots n}\rangle$. First, we shall define
some quantities to be used in the sequence of the article.

\begin{definition}

\label{def1}

The sum of the mutual informations of the $(n-k)$-partite reductions
of $\rho_{1\cdots n}$ is defined as $\mathcal{I}_{n-k}(\rho_{1\cdots n})$.

\end{definition}

\begin{definition}

The sum of the entropies of the $k$-partite reductions of $\rho_{1\cdots n}$
is defined as $\mathcal{S}_{k}(\rho_{1\cdots n})$.

\end{definition}

We observe that the Definition \ref{def1} only makes sense if $n-k\ge2$,
i.e., if $k=1,2,\cdots,n-2$. In what follows we will use these definitions
in the proof for the following proposition.

\begin{proposition}

The total amount of information shared among $n$ parties in an pure
state $|\psi_{1\cdots n}\rangle$ can be written as
\begin{equation}
I(|\psi_{1\cdots n}\rangle)=\frac{k!\left(\mathcal{I}_{n-k}(|\psi_{1\cdots n}\rangle)+\mathcal{S}_{k}(|\psi_{1\cdots n}\rangle)\right)}{\prod_{i=1}^{k}(n-i)}.
\label{eq:gen-rel}
\end{equation}

\end{proposition}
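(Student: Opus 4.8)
The plan is to prove the identity by a counting argument. The key observation is that both $\mathcal{I}_{n-k}(|\psi_{1\cdots n}\rangle)$ and $\mathcal{S}_{k}(|\psi_{1\cdots n}\rangle)$ can be re-expressed purely in terms of the single-party entropies $S(\rho_s)$, using two facts: first, Eq.~(\ref{eq:InPS}) applied not to the whole state but to each pure-state reduction's complement, i.e. for any subset $A\subseteq\{1,\dots,n\}$ with complement $A^c$, the Schmidt decomposition across the cut $A|A^c$ gives $S(\rho_A)=S(\rho_{A^c})$; and second, that an $(n-k)$-partite reduction $\rho_B$ (with $|B|=n-k$) has mutual information $I(\rho_B)=\sum_{s\in B}S(\rho_s)-S(\rho_B)$, where again $S(\rho_B)=S(\rho_{B^c})$ with $|B^c|=k$.

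First I would write $\mathcal{I}_{n-k}=\sum_{|B|=n-k}\big(\sum_{s\in B}S(\rho_s)-S(\rho_B)\big)$ and $\mathcal{S}_{k}=\sum_{|C|=k}S(\rho_C)$. Using $S(\rho_B)=S(\rho_{B^c})$ and noting that as $B$ ranges over all $(n-k)$-subsets, $B^c$ ranges over all $k$-subsets, the terms $-\sum_{|B|=n-k}S(\rho_B)$ and $+\mathcal{S}_k=\sum_{|C|=k}S(\rho_C)$ cancel exactly. This leaves $\mathcal{I}_{n-k}+\mathcal{S}_k=\sum_{|B|=n-k}\sum_{s\in B}S(\rho_s)$. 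The double sum counts each single-party entropy $S(\rho_s)$ once for every $(n-k)$-subset $B$ containing $s$; the number of such subsets is $\binom{n-1}{n-k-1}=\binom{n-1}{k}$. Hence $\mathcal{I}_{n-k}+\mathcal{S}_k=\binom{n-1}{k}\sum_{s=1}^n S(\rho_s)=\binom{n-1}{k}\,I(|\psi_{1\cdots n}\rangle)$ by Eq.~(\ref{eq:InPS}).

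Finally I would rearrange to get $I(|\psi_{1\cdots n}\rangle)=\big(\mathcal{I}_{n-k}+\mathcal{S}_k\big)/\binom{n-1}{k}$ and verify that $\binom{n-1}{k}^{-1}=k!(n-1-k)!/(n-1)! = k!/\big[(n-1)(n-2)\cdots(n-k)\big]=k!/\prod_{i=1}^k(n-i)$, matching the claimed form in Eq.~(\ref{eq:gen-rel}). I would also note the constraint $2\le n-k$, i.e. $k\le n-2$, required for $\mathcal{I}_{n-k}$ to be defined, consistent with Definition~\ref{def1}.

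The main obstacle, though it is more bookkeeping than genuine difficulty, is handling the cancellation between $-\sum_{|B|=n-k}S(\rho_B)$ and $\mathcal{S}_k$ carefully: one must be sure the bijection $B\mapsto B^c$ is used correctly and that the purity-induced symmetry $S(\rho_B)=S(\rho_{B^c})$ is invoked for every cut, including possibly the trivial ones. A secondary point to check is the edge behavior when $k=n-2$ (so $\mathcal{I}_{n-k}=\mathcal{I}_2$ is the sum of genuine bipartite mutual informations) and to confirm the combinatorial identity $\sum_s \#\{B : |B|=n-k,\, s\in B\} = n\binom{n-1}{k}$ distributes correctly term-by-term. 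Once these are in place the result follows immediately.
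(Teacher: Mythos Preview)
Your proof is correct and follows essentially the same approach as the paper: use purity and Schmidt decomposition to obtain $S(\rho_B)=S(\rho_{B^c})$, thereby reducing both $\mathcal{I}_{n-k}$ and $\mathcal{S}_k$ to sums of single-party entropies, and then count how often each $S(\rho_s)$ appears. The only difference is presentational: the paper works out the cases $k=1,2,3$ explicitly and then infers the general formula $\mathcal{I}_{n-k}=\frac{\prod_{i=1}^{k}(n-i)}{k!}\mathcal{S}_1-\mathcal{S}_k$ by ``inductive reasoning,'' whereas you carry out the counting for general $k$ directly via the bijection $B\mapsto B^{c}$ and the identity $\#\{B:|B|=n-k,\ s\in B\}=\binom{n-1}{k}$, which is arguably cleaner.
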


\begin{proof}

For the purpose of proving this proposition, it will be helpful first
to split the system $\mathcal{H}_{1}\otimes\cdots\otimes\mathcal{H}_{n}$
in two components $\mathcal{H}_{x}\otimes\mathcal{H}_{\overline{x}}$,
with $x$ denoting one or more parties and $\overline{x}$ being the
rest of the system. It follows then, from Schmidt's decomposition
(see e.g. Ref. \cite{Nielsen-Chuang}), that $S(\rho_{x})=S(\rho_{\overline{x}})$,
with $\rho_{x(\overline{x})}=\mathrm{tr}_{\overline{x}(x)}(|\psi_{1\cdots n}\rangle\langle\psi_{1\cdots n}|)$. 

Now, let us begin by computing the sum of the mutual informations
of the $(n-1)$-partite reductions of $|\psi_{1\cdots n}\rangle$.
In this case $k=1$ and
\begin{eqnarray}
\mathcal{I}_{n-1}(|\psi_{1\cdots n}\rangle) & = & \sum_{s=1}^{n}I(\rho_{\overline{s}})\nonumber \\
 & = & \sum_{s=1}^{n}\sum_{\underset{s'\ne s}{s'=1}}^{n}S(\rho_{s'})-\sum_{s=1}^{n}S(\rho_{\overline{s}}).
\end{eqnarray}
There are $n(n-1)$ one-party entropies in the first term on the right
hand side of the last equality. As the $n$ subsystems appear with
the same frequency in this term, and $S(\rho_{\overline{s}})=S(\rho_{s})$,
we get
\begin{equation}
\mathcal{I}_{n-1}=\frac{(n-1)}{1!}\mathcal{S}_{1}(|\psi_{1\cdots n}\rangle)-\mathcal{S}_{1}(|\psi_{1\cdots n}\rangle).
\end{equation}

For $k=2$, if we take the sum of the total correlation of the $(n-2)$-partite
marginals of $|\psi_{1\cdots n}\rangle$, we obtain
\begin{eqnarray}
\mathcal{I}_{n-2} & = & \sum_{s=1}^{n}\sum_{s'=s+1}^{n}I(\rho_{\overline{ss'}})\nonumber \\
 & = & \sum_{s=1}^{n}\sum_{s'=s+1}^{n}\sum_{\underset{s'\ne s,s'}{s''=1}}^{n}S(\rho_{s''})-\sum_{s=1}^{n}\sum_{s'=s+1}^{n}S(\rho_{\overline{ss'}})\nonumber \\
 & = & \frac{(n-1)(n-2)}{2!}\mathcal{S}_{1}(|\psi_{1\cdots n}\rangle)-\mathcal{S}_{2}(|\psi_{1\cdots n}\rangle).
\end{eqnarray}
In order to obtain the last equality, and below, we note that there
are $n!/(k!(n-k)!)$ different reductions of $|\psi_{1\cdots n}\rangle$
comprising $k$ parties. 

For $k=3$, the sum of $(n-3)$-partite mutual informations is
\begin{eqnarray}
\mathcal{I}_{n-3} & = & \sum_{s=1}^{n}\sum_{s'=s+1}^{n}\sum_{s''=s'+1}^{n}I(\rho_{\overline{ss's''}})\nonumber \\
 & = & \sum_{s=1}^{n}\sum_{s'=s+1}^{n}\sum_{s''=s'+1}^{n}\sum_{\underset{s'''\ne s,s',s''}{s'''=1}}^{n}S(\rho_{s'''})-\mathcal{S}_{3}\\
 & = & \frac{(n-1)(n-2)(n-3)}{3!}\mathcal{S}_{1}(|\psi_{1\cdots n}\rangle)-\mathcal{S}_{3}(|\psi_{1\cdots n}\rangle).\nonumber 
\end{eqnarray}

So, by inductive reasoning, one see that for any value of $k$ in
the set $\{1,2,\cdots,n-2\}$, the following equality holds
\begin{equation}
\mathcal{I}_{n-k}=\frac{\prod_{i=1}^{k}(n-i)}{k!}\mathcal{S}_{1}(|\psi_{1\cdots n}\rangle)-\mathcal{S}_{k}(|\psi_{1\cdots n}\rangle).
\end{equation}
Once $I(|\psi_{1\cdots n}\rangle)=\mathcal{S}_{1}(|\psi_{1\cdots n}\rangle)$,
this equation is seen to be equivalent to Eq. (\ref{eq:gen-rel}),
concluding thus the proof of the proposition. \qed

\end{proof}

\subsection{Some Particular Cases}

\label{sec2a}

Now we regard the particular case in which $k=1$ (and therefore $n\ge3$).
It follows from Eq. (\ref{eq:gen-rel}) that
\begin{equation}
I(|\psi_{1\cdots n}\rangle)=\frac{\mathcal{I}_{n-1}(|\psi_{1\cdots n}\rangle)}{(n-2)\label{eq:InIn-1}}.
\end{equation}
Thus, for three-particle pure states ($n=3$), the following equality
is obtained:
\begin{equation}
I(|\psi_{123}\rangle)=\mathcal{I}_{2}(|\psi_{123}\rangle).\label{eq:eq-3ps}
\end{equation}
So the total correlation in $|\psi_{123}\rangle$ is shown to be equal
to the sum of the mutual informations of its bipartite marginals.
As an example let us consider the three-qubit Greenberger-Horne-Zeilinger
state \cite{GHZ}: $|\mathrm{GHZ}_{3}\rangle=2^{-1/2}(|0_{1}0_{2}0_{3}\rangle+|1_{1}1_{2}1_{3}\rangle)$,
where $\{|0_{s}\rangle,|1_{s}\rangle\}$ is the computational basis
for the qubit $s$. In the last equation and throughout the article
we shall use the notation: $|\psi_{s}\rangle\otimes|\phi_{s'}\rangle=|\psi_{s}\phi_{s'}\rangle=|\psi\phi\rangle$.
The reduced states of $|\mathrm{GHZ}_{3}\rangle$ are $\rho_{ss'}=2^{-1}(|0_{s}0_{s'}\rangle\langle0_{s}0_{s'}|+|1_{s}1_{s'}\rangle\langle1_{s}1_{s'}|)$,
with $ss'=12,13,23$, and $\rho_{r}=2^{-1}(|0_{r}\rangle\langle0_{r}|+|1_{r}\rangle\langle1_{r}|)$,
with $r=1,2,3$. Thus $S(\rho_{ss'})=S(\rho_{r})=1$, which leads
to $I(|\mathrm{GHZ}_{3}\rangle)=3$ and $I(\rho_{ss'})=1$, and consequently
to the equalities in Eqs. (\ref{eq:InIn-1}) and (\ref{eq:eq-3ps}).

On the other side, one see that for $n\ge4$ the sum of the $(n-1)$-partite
reductions' total correlation overestimate the total information shared
among the $n$ subsystems, i.e.,
\begin{equation}
I(|\psi_{1\cdots n}\rangle)<\mathcal{I}_{n-1}(|\psi_{1\cdots n}\rangle)\label{eq:ineq-4ps}
\end{equation}
for $n\ge 4$.
For the sake of exemplifying the applicability of this inequality
we regard again the example of Greenberger-Horne-Zeilinger states,
but for four qubits: $|\mathrm{GHZ}_{4}\rangle=2^{-1/2}(|0_{1}0_{2}0_{3}0_{4}\rangle+|1_{1}1_{2}1_{3}1_{4}\rangle)$.
The reduced states of $|\mathrm{GHZ}_{4}\rangle$ we need here are
the three-qubit density operators:
$
\rho_{ss's''} = 2^{-1}(|000\rangle\langle000| + |111\rangle\langle111|),
$
with $ss's''=123,124,133,234$, and the one-qubit density matrices:
$\rho_{r}=2^{-1}(|0_{r}\rangle\langle0_{r}|+|1_{r}\rangle\langle1_{r}|)$,
with $r=1,2,3,4$. Hence $S(\rho_{ss's''})=S(\rho_{r})=1$. Therefore it follows that
$I(|\mathrm{GHZ}_{4}\rangle)=4$ and $I(\rho_{ss's''})=2$. Using
these values we obtain the relations: 
\begin{equation}
\mathcal{I}_{3}(|\mathrm{GHZ}_{4}\rangle)=8=(4-2)I(|\mathrm{GHZ}_{4}\rangle)>I(|\mathrm{GHZ}_{4}\rangle),
\end{equation}
which satisfy the equality in Eq. (\ref{eq:InIn-1}) and the inequality
in Eq. (\ref{eq:ineq-4ps}).

\section{Distribution of mutual information in three-partite states}

\label{sec3}

\subsection{Generalized Monogamy Relations for Total Correlation}

One of the most important inequalities in quantum information theory
is the strong subadditivity property of von Neumann entropy (SSA),
by which \cite{Lieb-Ruskai}:
\begin{equation}
S(\rho_{ss'})+S(\rho_{ss''})-S(\rho_{s})-S(\rho_{123})\ge0,\label{eq:ss}
\end{equation}
where $ss's''=123,231,321$. 

Recently, Carlen and Lieb proved
an extended version for the SSA (ESSA) \cite{ESSA}:
\begin{eqnarray}
 &  & S(\rho_{ss'})+S(\rho_{ss''})-S(\rho_{s})-S(\rho_{123})\ge\nonumber \\
 &  & 2\max\{S(\rho_{s'})-S(\rho_{s's''}),S(\rho_{s''})-S(\rho_{s's''}),0\}
\end{eqnarray}
We shall use this inequality to prove the following proposition.

\begin{proposition}

The total mutual information of three-partite states imposes the following
constraint for the correlations of its bipartite marginals:
\begin{eqnarray}
I(\rho_{123}) & \ge & I(\rho_{ss'})+I(\rho_{ss''})+\label{eq:gmr1-i}\\
 &  & 2\max\{I(\rho_{s's''})-S(\rho_{s'}),I(\rho_{s's''})-S(\rho_{s''}),0\}\nonumber 
\end{eqnarray}
with $ss's''=123,231,321$.

\end{proposition}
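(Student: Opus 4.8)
The plan is to obtain the inequality by a direct algebraic rewriting of the ESSA inequality of Carlen and Lieb, trading every von Neumann entropy of a two-party marginal for the corresponding mutual information. The key identities I would use repeatedly are $S(\rho_{ab})=S(\rho_a)+S(\rho_b)-I(\rho_{ab})$ for any bipartition, together with $I(\rho_{123})=S(\rho_1)+S(\rho_2)+S(\rho_3)-S(\rho_{123})$, which is just Eq.~(\ref{eq:mQMI}) for $n=3$.

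First I would work on the left-hand side of the ESSA inequality. Substituting $S(\rho_{ss'})=S(\rho_s)+S(\rho_{s'})-I(\rho_{ss'})$ and $S(\rho_{ss''})=S(\rho_s)+S(\rho_{s''})-I(\rho_{ss''})$ into $S(\rho_{ss'})+S(\rho_{ss''})-S(\rho_s)-S(\rho_{123})$, the two copies of $S(\rho_s)$ collapse to one, leaving $S(\rho_s)+S(\rho_{s'})+S(\rho_{s''})-S(\rho_{123})-I(\rho_{ss'})-I(\rho_{ss''})$, i.e. exactly $I(\rho_{123})-I(\rho_{ss'})-I(\rho_{ss''})$.

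Next I would treat the right-hand side. Writing $S(\rho_{s's''})=S(\rho_{s'})+S(\rho_{s''})-I(\rho_{s's''})$, one finds $S(\rho_{s'})-S(\rho_{s's''})=I(\rho_{s's''})-S(\rho_{s''})$ and likewise $S(\rho_{s''})-S(\rho_{s's''})=I(\rho_{s's''})-S(\rho_{s'})$, so the bracket $\{S(\rho_{s'})-S(\rho_{s's''}),S(\rho_{s''})-S(\rho_{s's''}),0\}$ becomes $\{I(\rho_{s's''})-S(\rho_{s''}),I(\rho_{s's''})-S(\rho_{s'}),0\}$, whose maximum is unchanged by swapping the first two entries. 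Putting the two sides back together and rearranging yields Eq.~(\ref{eq:gmr1-i}), and the cyclic statement $ss's''=123,231,321$ follows because ESSA already holds for those three choices.

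I do not anticipate a genuine obstacle here: the content is entirely in the ESSA inequality, and the remaining work is bookkeeping. The one place to be careful is keeping the roles of $s'$ and $s''$ straight when passing from the entropy form to the mutual-information form, since the index that appears subtracted gets swapped; I would double-check this by verifying the identity on a small explicit state (e.g. $|\mathrm{GHZ}_3\rangle$, where all one- and two-party entropies equal $1$, so both sides reduce to $3\ge 1+1+2\max\{0,0,0\}$) before committing to the final expression.
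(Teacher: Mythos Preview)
Your proposal is correct and follows essentially the same route as the paper: both arguments amount to substituting $S(\rho_{ab})=S(\rho_a)+S(\rho_b)-I(\rho_{ab})$ into the ESSA inequality to convert it into the stated mutual-information form, using the identities $S(\rho_{s'})-S(\rho_{s's''})=I(\rho_{s's''})-S(\rho_{s''})$ and $S(\rho_{s''})-S(\rho_{s's''})=I(\rho_{s's''})-S(\rho_{s'})$ on the right-hand side. The only cosmetic difference is that the paper first recasts the ordinary SSA as $I(\rho_{123})\ge I(\rho_{ss'})+I(\rho_{ss''})$ (in relative-entropy form) and then appends the ESSA correction, whereas you handle the full ESSA in one step.
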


\begin{proof}

Let us begin the proof by rewriting the SSA in Eq. (\ref{eq:ss})
as
\begin{eqnarray}
\sum_{i=s,s',s''}S(\rho_{i})-S(\rho_{123}) & \ge & \sum_{i=s,s'}S(\rho_{i})-S(\rho_{ss'})+\nonumber \\
 &  & \sum_{i=s,s''}S(\rho_{i})-S(\rho_{ss''}).
\end{eqnarray}
Or, equivalently,
\begin{eqnarray}
S(\rho_{123}||\rho_{s}\otimes\rho_{s'}\otimes\rho_{s''}) & \ge & S(\rho_{ss'}||\rho_{s}\otimes\rho_{s'})\nonumber \\
 &  & +S(\rho_{ss''}||\rho_{s}\otimes\rho_{s''}).
\end{eqnarray}
Using $S(\rho_{s'})-S(\rho_{s's''})=I(\rho_{s's''})-S(\rho_{s''})$,
$S(\rho_{s''})-S(\rho_{s's''})=I(\rho_{s's''})-S(\rho_{s'})$, and
the definition of multipartite mutual information presented in Sec.
\ref{sec1}, we see that the last equation together with the ESSA
implies the inequality in Eq. (\ref{eq:gmr1-i}), concluding thus
the proof of the proposition. \qed

\end{proof}

For three-partite systems whose reduced states are such that $S(\rho_{s'})\le S(\rho_{s's''})$
and $S(\rho_{s''})\le S(\rho_{s's''})$, we have a weaker version
of the inequality in Eq. (\ref{eq:gmr1-i}):
\begin{equation}
I(\rho_{123})\ge I(\rho_{ss'})+I(\rho_{ss''}).\label{eq:gmr1}
\end{equation}
This \emph{generalized monogamy relation} entail that the total correlation
in a three-particle mixed state restrict the information which a subsystem
can share individually with the other two parties of the system. Similar
constraints were obtained recently for bipartite quantum correlations
using the global quantum discord (see Ref. \cite{Braga-MGQD} and
references therein). For reviews about the classical and quantum aspects
of correlations see Refs. \cite{Celeri-IJQI,Modi-RMP,Maziero-bjp}.

We observe that Eq. (\ref{eq:gmr1-i}) is in general a stronger version
of the monogamy inequality in Eq. (\ref{eq:gmr1}). Although the first
is cumbersome, it shows that the three-partite mutual information
of any state $\rho_{123}$ limits the total amount of correlation
that its bipartite reductions can possess.

\subsection{An Inequality for Three-Partite Mutual Information}

\begin{proposition}

The total mutual information of three-partite mixed states is lower
bounded by the sum of the mutual informations of its bipartite marginals
as follows:
\begin{equation}
I(\rho_{123})\ge\frac{2}{3}\mathcal{I}_{2}(\rho_{123}).\label{eq:mr1}
\end{equation}

\end{proposition}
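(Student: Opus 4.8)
The plan is to derive inequality (\ref{eq:mr1}) directly from the generalized monogamy relation (\ref{eq:gmr1-i}) by summing it over the three cyclic choices of the distinguished party. Concretely, I would write down (\ref{eq:gmr1-i}) three times, once for each assignment $ss's''=123,\ 231,\ 312$, and add the three inequalities. On the left-hand side this produces $3\,I(\rho_{123})$. On the right-hand side, each bipartite mutual information $I(\rho_{12})$, $I(\rho_{13})$, $I(\rho_{23})$ appears exactly twice among the terms $I(\rho_{ss'})+I(\rho_{ss''})$ (since each bipartition is ``seen'' by two of the three choices of distinguished party), so those terms contribute $2\bigl(I(\rho_{12})+I(\rho_{13})+I(\rho_{23})\bigr)=2\,\mathcal{I}_2(\rho_{123})$.

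The remaining task is to handle the three $2\max\{\cdots\}$ terms. The key observation is simply that each such term is nonnegative, because $0$ is one of the arguments of the maximum; hence dropping them only weakens the inequality. Thus summing gives
\begin{equation}
3\,I(\rho_{123})\ge 2\,\mathcal{I}_2(\rho_{123})+\text{(nonnegative terms)}\ge 2\,\mathcal{I}_2(\rho_{123}),
\end{equation}
and dividing by $3$ yields (\ref{eq:mr1}). An equivalent and perhaps cleaner route, which avoids invoking the ESSA at all, is to use the weak generalized monogamy form (\ref{eq:gmr1}) when it applies, but since (\ref{eq:gmr1}) only holds under the entropic side-conditions $S(\rho_{s'})\le S(\rho_{s's''})$, it is safer to sum the universally valid stronger relation (\ref{eq:gmr1-i}) and discard the max-terms at the end.

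I do not expect a genuine obstacle here; the only point requiring a moment of care is the bookkeeping that shows each bipartite term $I(\rho_{ss'})$ occurs with total multiplicity two after summing over the three cyclic labelings — this is where an off-by-one error could creep in, so I would verify it explicitly by listing, for each of the three instances of (\ref{eq:gmr1-i}), which two bipartite marginals appear. One could also remark afterwards that, by the same summation applied to the full inequality (\ref{eq:gmr1-i}) without discarding the maxima, one obtains a tighter lower bound than (\ref{eq:mr1}); but the clean factor $2/3$ bound is what is needed for the subsequent discussion of residual three-partite total correlation, so that is the form I would record as the proposition.
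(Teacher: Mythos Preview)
Your proposal is correct and is essentially the paper's own argument: the paper simply sums the monogamy inequality over the three cyclic assignments $ss's''=123,\,231,\,312$ and divides by three. One small clarification worth making: the weak form (\ref{eq:gmr1}) actually holds \emph{unconditionally} for every tripartite state---expanding the mutual informations shows it is nothing but strong subadditivity (\ref{eq:ss}) rewritten---so the entropic side-conditions you were cautious about are not needed, and your detour through (\ref{eq:gmr1-i}) followed by discarding the nonnegative $\max$-terms is exactly equivalent to summing (\ref{eq:gmr1}) directly, which is what the paper does.
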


\begin{proof}

It is straightforward to prove this proposition by combining Eq. (\ref{eq:gmr1})
for $ss's''=123,231,312$. \qed

\end{proof}

In words, the lower bound in Eq. (\ref{eq:mr1}) means that the distinguishability
between $\rho_{123}$ and $\rho_{1}\otimes\rho_{2}\otimes\rho_{3}$ is
greater or equal than two-thirds of the sum of the distinguishabilities between
its bipartite marginals and their one-particle reductions in the product
form.

Now, we show that a subset of the classically correlated states (see
for instance the reference \cite{Piani-NLB}) saturates the inequality
above. These states can be written as 
$\chi_{123}=\sum_{i_{1}=i_{2}=i_{3}}p_{i_{1}i_{2}i_{3}}|\psi_{i_{1}}\psi_{i_{2}}\psi_{i_{3}}\rangle\langle\psi_{i_{1}}\psi_{i_{2}}\psi_{i_{3}}|,$
where $\{p_{i_{1}i_{2}i_{3}}\}$ is a probability distribution (that is to say, $p_{i_{1}i_{2}i_{3}}\ge0$
and $\sum_{i_{1}=i_{2}=i_{3}}p_{i_{1}i_{2}i_{3}}=1$) and $\{|\psi_{i_{s}}\rangle\}$
are local orthonormal basis. Noting that the entropies are given by
\begin{eqnarray}
S(\chi_{123})&=&S(\chi_{ss'})=S(\chi_{s''})\nonumber\\
&=&\sum_{i_{1}=i_{2}=i_{3}}p_{i_{1}i_{2}i_{3}}\log_{2}p_{i_{1}i_{2}i_{3}}^{-1},
\end{eqnarray}
with $ss'=12\mbox{, }13\mbox{, }23$ and $s''=1\mbox{, }2\mbox{, }3$,
one obtain $I(\chi_{ss'})=S(\chi_{s''})$ and thus $2\mathcal{I}_{2}(\chi_{123})/3=2S(\chi_{s''})=I(\chi_{123}),$
which shows that the class of states $\chi_{123}$ indeed saturates
the inequality in Eq. (\ref{eq:mr1}), and consequently lead to equality
in the SSA \cite{Hayden-SSvNE} and in the ESSA. So, the lower bound
in Eq. (\ref{eq:mr1}) is as tight as it could be.

On the other hand, one can show that for states with no genuine tripartite
correlation \cite{Bennett-GMC,Giorgi-GMC,Maziero-GMC}, viz., states
of the form $\rho_{s}\otimes\rho_{s's''}$, the equality holds: $I(\rho_{s}\otimes\rho_{s's''})=I(\rho_{s's''})
=\mathcal{I}_{2}(\rho_{s}\otimes\rho_{s's''})$
(see Sec. \ref{sec:residual}). In Sec. \ref{sec2a} we proved that
a similar equality is obtained for three-partite pure states, namely,
$I(|\psi_{123}\rangle)=\mathcal{I}_{2}(|\psi_{123}\rangle)$. In addition,
Streltsov \emph{et al.} proved that any positive measure of correlation
that is non-increasing under quantum operations on at least one of
its subsystems is maximal for some pure state \cite{Streltsov-Monogamy}.
As the quantum relative entropy fulfills this condition \cite{Vedral-RE-RMP},
we have that for any density operator $\rho_{123}$ there exists a
state vector $|\psi_{123}\rangle$ such that $I(|\psi_{123}\rangle)\ge I(\rho_{123})$.
As $I(|\psi_{123}\rangle)=\mathcal{I}_{2}(|\psi_{123}\rangle)$ for
all three-partite pure states, one may ask if the total mutual information
of general, mixed, three-partite states is limited by the correlations
of its bipartite marginals. Below we give an example showing that
this is not generally the case. Let us regard a mixture of $\mathrm{W}$
\cite{W-state} and $\mathrm{GHZ}$ \cite{GHZ} three-qubit states
\cite{footnote-1}:
\begin{equation}
\rho_{123}=p|\mathrm{W}_{3}\rangle\langle\mathrm{W}_{3}|+(1-p)|\mathrm{GHZ}_{3}\rangle\langle\mathrm{GHZ}_{3}|,\label{eq:W-GHZ}
\end{equation}
with $|\mathrm{W}_{3}\rangle=3^{-1/2}(|0_{1}0_{2}1_{3}\rangle+|0_{1}1_{2}0_{3}\rangle+|1_{1}0_{2}0_{3}\rangle)$
and $0\le p\le1$. The total correlation of $\rho_{123}$ and the
sum of the mutual informations of its bipartite marginals are shown
in Fig. \ref{fig1}. 

\begin{figure}
\includegraphics[scale=0.5]{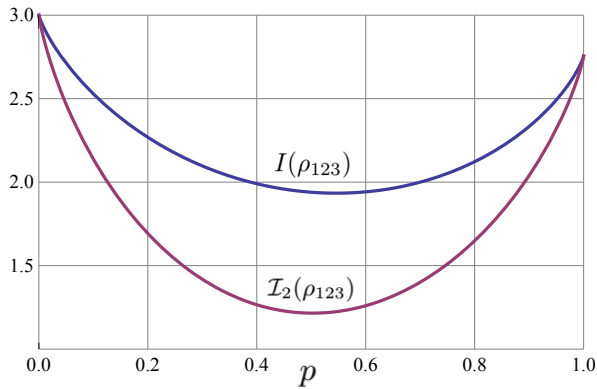}
\caption{(Color online). Total mutual information $I(\rho_{123})$ and the
sum of bipartite total correlations $\mathcal{I}_{2}(\rho_{123})$
for the mixture of $\mathrm{W}$ and $\mathrm{GHZ}$ states in Eq.
(\ref{eq:W-GHZ}) (see the text for details).}
\label{fig1}
\end{figure}

 We see that $I(\rho_{123})>\mathcal{I}_{2}(\rho_{123})$ for all
values of $p$ with exception of $p=0$ and $p=1$, where the states
are pure and we have equality of the correlations ($I(\rho_{123})=\mathcal{I}_{2}(\rho_{123})$).

If there are two states $|\psi_{123}\rangle$ and $\rho_{123}$ (i)
having the same bipartite reductions and (ii) connected by local quantum
operations, i.e., $\rho_{123}=\Lambda_{l}[|\psi_{123}\rangle]$, with
$\Lambda_{l}:\mathcal{D}(\mathcal{H}_{s})\rightarrow\mathcal{D}(\mathcal{H}_{s})$,
then
\begin{eqnarray}
I(\rho_{123})=I(\Lambda_{l}[|\psi_{123}\rangle]) & \le & I(|\psi_{123}\rangle)\nonumber \\
 & = & \mathcal{I}_{2}(|\psi_{123}\rangle)=\mathcal{I}_{2}(\rho_{123}).
\end{eqnarray}
So, the negative result above (Fig. \ref{fig1}) rules out the fulfillment
of both conditions (i) and (ii) for general pairs of states, though
for particular cases these conditions can be satisfied.

\subsection{Residual Versus Genuine Three-Partite Correlations}

\label{sec:residual}

Monogamy inequalities are frequently used as a starting point to defining measures for residual multipartite entanglement
\cite{Wootters-Dist.Ent} and quantum discord \cite{Braga-MGQD}. The inequality above (Eq. (\ref{eq:gmr1})) 
can be used to define a positive quantifier for the \emph{residual three-partite total correlation} in $\rho_{123}$
as follows:
\begin{equation}
I_{r}(\rho_{123}):=I(\rho_{123})-\frac{2}{3}\mathcal{I}_{2}(\rho_{123}).\label{eq:residual}
\end{equation}

\emph{Genuine $n$-partite correlations} (GnC) are those correlations that cannot be accounted for by looking at $n-1$ or 
less subsystems. It is natural questioning if $I_{r}$ does quantify GnC in three-partite 
states \cite{Bennett-GMC,Giorgi-GMC,Maziero-GMC}.
For the sake of answering this question, let us assume that a state
$\rho_{123}$ does not presents genuine three-partite total correlation,
e.g., $\rho_{123}=\rho_{1}\otimes\rho_{23}$. Hence $\rho_{12}=\rho_{1}\otimes\rho_{2}$
and $\rho_{13}=\rho_{1}\otimes\rho_{3}$. In this case $I(\rho_{123})=S(\rho_{1}\otimes\rho_{23}||\rho_{1}\otimes\rho_{2}\otimes\rho_{3})
=S(\rho_{23}||\rho_{2}\otimes\rho_{3})=I(\rho_{23}).$
As $I(\rho_{12})=I(\rho_{13})=0$, the residual tripartite correlation
is given by $I_{r}(\rho_{1}\otimes\rho_{23})=3^{-1}I(\rho_{23}).$
So, one see that although the residual total correlation somehow quantifies
the correlations in a multipartite state, $I_{r}$ is nonzero for
states which do not possess genuine three-partite total correlation.
So $I_{r}$ cannot be used for the purpose of quantifying or identifying
genuine multipartite correlations.

\section{Concluding remarks}

In this article we addressed the problem of distribution of mutual
information in multipartite systems, focusing on $n$-partite state
vectors and three-partite density operators. We obtained a general
relation for the relative entropy of total correlation of $n$-partite
pure states in terms of the mutual informations and entropies of its
marginals. The total correlation of three-partite pure states was
shown to be completely accounted for by the correlations of its bipartite
reductions. However, for systems in a pure state and with $n>3$ subsystems
the sum of the mutual informations of the $(n-1)$-partite reductions
of $|\psi_{1\dots n}\rangle$ overestimate its total correlation.
This fact indicates that, in this last case, there must exist redundant
information shared among the subsystems. That is to say, if correlation
is seen as shared information then a subsystem must share the same
information with two or more others parties of the whole physical
system.

Monogamy relations for bipartite correlations via bipartite correlations
were first noticed for entanglement measures \cite{Terhal-Monogamy,Wootters-Dist.Ent}
and for non-local quantum correlations \cite{Toner-Monogamy.NLQC}.
This kind of inequality was shown recently to be not generally applicable
for separable-state quantum correlations \cite{Streltsov-Monogamy}.
Here, continuing the program initiated in Ref. \cite{Braga-MGQD},
we showed that monogamy relations are restored for bipartite mutual
informations if we employ a relative entropy-based measure of total
multipartite correlation. These general monogamy inequalities led
to a tight lower bound for the mutual information of three-partite
mixed states in terms of the respective correlations of its bipartite
reductions. It is important to mention here that the main point of the original monogamy relations was to differentiate
quantum (non-separable or non-local) from classical correlations \cite{Terhal-Monogamy}.
As generalized monogamy relations via multipartite correlations hold
for total correlations, we notice that this kind of inequality cannot
be generally used to characterize the quantumness of the correlations in a physical system.

Looking for a possible interpretation for the equalities and inequalities
obtained in this article in terms of erasure of correlation by added
noise \cite{Groisman-QMI} is an interesting topic for future investigations.

\begin{acknowledgements}
This work was supported by the Conselho Nacional de Desenvolvimento Cient\'ifico e Tecnol\'ogico (CNPq) through the Instituto 
Nacional de Ci\^encia e Tecnologia de Informação Qu\^antica (INCQ-IQ).
\end{acknowledgements}

\appendix

\section{Proof of the equality in Eq. (\ref{QRE-equality})}
\label{appendixA}
Let us first prove some lemmas to be used subsequently.

\begin{lemma}
 Let $\chi_{s}\in\mathcal{H}_{s}$ and $\xi_{s'}\in\mathcal{H}_{s'}$ be any pair of density operators for the systems $s$ and $s'$, 
 respectively. It follows that
 \begin{equation}
  \log_{2}(\chi_{s}\otimes\xi_{s'})=\log_{2}(\chi_{s})\otimes\mathbb{I}_{s'} + \mathbb{I}_{s}\otimes\log_{2}(\xi_{s'}).
 \end{equation}
 \label{lemma1}
\end{lemma}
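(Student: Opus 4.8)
The plan is to use the spectral decomposition of each operator and the standard definition of a matrix function. First I would write $\chi_s = \sum_i \lambda_i |a_i\rangle\langle a_i|$ and $\xi_{s'} = \sum_j \mu_j |b_j\rangle\langle b_j|$ for the respective eigendecompositions, with $\{|a_i\rangle\}$ an orthonormal basis of $\mathcal{H}_s$ and $\{|b_j\rangle\}$ an orthonormal basis of $\mathcal{H}_{s'}$. Then $\chi_s \otimes \xi_{s'} = \sum_{i,j} \lambda_i \mu_j \, |a_i\rangle\langle a_i| \otimes |b_j\rangle\langle b_j|$, which is already in diagonal form since $\{|a_i\rangle \otimes |b_j\rangle\}$ is an orthonormal basis of $\mathcal{H}_s \otimes \mathcal{H}_{s'}$ with eigenvalues $\lambda_i \mu_j$.

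Next I would apply the definition of $\log_2$ of a positive operator as the operator with the same eigenvectors and logarithmically transformed eigenvalues, obtaining
\begin{equation}
\log_2(\chi_s \otimes \xi_{s'}) = \sum_{i,j} \log_2(\lambda_i \mu_j)\, |a_i\rangle\langle a_i| \otimes |b_j\rangle\langle b_j|.
\end{equation}
Then I would invoke $\log_2(\lambda_i \mu_j) = \log_2 \lambda_i + \log_2 \mu_j$ and split the sum into two pieces. Using the resolutions of identity $\mathbb{I}_s = \sum_i |a_i\rangle\langle a_i|$ and $\mathbb{I}_{s'} = \sum_j |b_j\rangle\langle b_j|$, the first piece collapses to $\big(\sum_i \log_2\lambda_i |a_i\rangle\langle a_i|\big) \otimes \mathbb{I}_{s'} = \log_2(\chi_s) \otimes \mathbb{I}_{s'}$ and the second to $\mathbb{I}_s \otimes \log_2(\xi_{s'})$, giving the claimed identity.

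One caveat worth handling: if $\chi_s$ or $\xi_{s'}$ has zero eigenvalues, $\log_2$ is only defined on the support, so one should either restrict attention to the support subspaces (as is standard in quantum information when these operators appear inside a relative entropy) or adopt the convention that the relevant projectors onto kernels are dropped; the additive splitting of eigenvalues $\log_2(\lambda_i\mu_j)$ is only used on indices where both $\lambda_i>0$ and $\mu_j>0$, so no genuine obstacle arises. Honestly there is no hard part here — the identity is essentially the statement that $\log$ turns the tensor product of positive operators into a sum, and the only thing to be careful about is bookkeeping with the kernels.
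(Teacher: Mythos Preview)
Your proof is correct and follows essentially the same route as the paper: spectral decompositions of $\chi_s$ and $\xi_{s'}$, application of the functional calculus to the product basis, the additive property of $\log_2$ on eigenvalues, and the completeness relations to recover $\log_2(\chi_s)\otimes\mathbb{I}_{s'}+\mathbb{I}_s\otimes\log_2(\xi_{s'})$. Your explicit remark on handling zero eigenvalues via restriction to supports is a useful addition that the paper leaves implicit.
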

\begin{proof}
 A function $f:\mathbb{C}\rightarrow\mathbb{C}$ from the complex numbers to the complex numbers 
 is applied to normal operators $O$ (with eigenvalues $o_{i}$ and 
 eigenvectors $|o_{i}\rangle$) as follows: $f(O)=\sum_{i}f(o_{i})|o_{i}\rangle\langle o_{i}|$.
 Now let the eigen-decompositions of the density operators regarded above be $\chi_{s}=\sum_{i}a_{i}|a_{i}\rangle\langle a_{i}|$ and
 $\xi_{s}=\sum_{i}b_{i}|b_{i}\rangle\langle b_{i}|$. So
 \begin{eqnarray}
  \log_{2}(\chi_{s}\otimes\xi_{s'}) 
  &=& \sum_{i,j}\log_{2}(a_{i}b_{j})|a_{i}\rangle\langle a_{i}|\otimes|b_{j}\rangle\langle b_{j}|  \\
  &=& \sum_{i}\log_{2}(a_{i})|a_{i}\rangle\langle a_{i}|\otimes\sum_{j}|b_{j}\rangle\langle b_{j}| \nonumber \\
  && + \sum_{i}|a_{i}\rangle\langle a_{i}|\otimes\sum_{j}\log_{2}(b_{j})|b_{j}\rangle\langle b_{j}|  \\
  &=& \log_{2}(\chi_{s})\otimes\mathbb{I}_{s'} + \mathbb{I}_{s}\otimes\log_{2}(\xi_{s'}) 
 \end{eqnarray}
In order to obtain the last equality we used the completeness relations for the state spaces $\mathcal{H}_{s}$ and 
$\mathcal{H}_{s'}$. \qed
\end{proof}

\begin{lemma}
 Let $f:\mathbb{C}\rightarrow\mathbb{C}$ be any function from the complex numbers to the complex numbers and 
 let $\xi_{s'}\in\mathcal{H}_{s'}$ be any state of system $s'$. Then it follows that
 \begin{equation}
  f(\mathbb{I}_{s}\otimes\xi_{s'}) = \mathbb{I}_{s}\otimes f(\xi_{s'}).
 \end{equation}
 \label{lemma2}
\end{lemma}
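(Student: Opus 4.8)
The plan is to prove Lemma~\ref{lemma2} by the same spectral-calculus argument used in Lemma~\ref{lemma1}, exploiting the fact that $\mathbb{I}_{s}\otimes\xi_{s'}$ is a normal operator whose spectral decomposition is inherited entirely from $\xi_{s'}$. First I would fix an eigen-decomposition $\xi_{s'}=\sum_{j}b_{j}|b_{j}\rangle\langle b_{j}|$ and an arbitrary orthonormal basis $\{|a_{i}\rangle\}$ of $\mathcal{H}_{s}$, so that $\mathbb{I}_{s}=\sum_{i}|a_{i}\rangle\langle a_{i}|$. Then
\begin{equation}
\mathbb{I}_{s}\otimes\xi_{s'}=\sum_{i,j}b_{j}\,|a_{i}\rangle\langle a_{i}|\otimes|b_{j}\rangle\langle b_{j}|,
\end{equation}
which exhibits $\{|a_{i}\rangle\otimes|b_{j}\rangle\}$ as an orthonormal eigenbasis with eigenvalues $b_{j}$ (independent of $i$).

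Next I would apply the definition of a function of a normal operator, $f(O)=\sum_{k}f(o_{k})|o_{k}\rangle\langle o_{k}|$, recalled in the proof of Lemma~\ref{lemma1}, to the operator $O=\mathbb{I}_{s}\otimes\xi_{s'}$ with the eigenpairs just identified. This gives
\begin{equation}
f(\mathbb{I}_{s}\otimes\xi_{s'})=\sum_{i,j}f(b_{j})\,|a_{i}\rangle\langle a_{i}|\otimes|b_{j}\rangle\langle b_{j}|
=\Bigl(\sum_{i}|a_{i}\rangle\langle a_{i}|\Bigr)\otimes\Bigl(\sum_{j}f(b_{j})|b_{j}\rangle\langle b_{j}|\Bigr).
\end{equation}
Recognizing the first factor as the completeness relation for $\mathcal{H}_{s}$, i.e.\ $\mathbb{I}_{s}$, and the second as $f(\xi_{s'})$ by the same functional-calculus definition, yields $f(\mathbb{I}_{s}\otimes\xi_{s'})=\mathbb{I}_{s}\otimes f(\xi_{s'})$, which is the claim.

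The only subtlety — and hence the ``main obstacle,'' though it is minor — is the possible degeneracy of the eigenvalues $b_{j}$: since $\mathbb{I}_{s}\otimes\xi_{s'}$ has each $b_{j}$ with multiplicity $\dim\mathcal{H}_{s}$, one should note that the functional calculus is well defined on the eigenspaces and that the computation above does not depend on which orthonormal basis of each degenerate eigenspace is chosen. A clean way to sidestep this entirely is to invoke Lemma~\ref{lemma1} in the special case $\chi_{s}=\mathbb{I}_{s}$ only when $f=\log_{2}$, but since here $f$ is arbitrary I would instead just remark that grouping the sum over the index $i$ is legitimate precisely because the eigenvalue $b_{j}$ does not depend on $i$, so $f(b_{j})$ factors out of the $i$-sum. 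With that observation the proof is complete.

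\begin{proof}
Let $\xi_{s'}=\sum_{j}b_{j}|b_{j}\rangle\langle b_{j}|$ be an eigen-decomposition, and let $\{|a_{i}\rangle\}$ be any orthonormal basis of $\mathcal{H}_{s}$. Then $\{|a_{i}\rangle\otimes|b_{j}\rangle\}$ is an orthonormal eigenbasis of $\mathbb{I}_{s}\otimes\xi_{s'}$ with eigenvalue $b_{j}$ attached to $|a_{i}\rangle\otimes|b_{j}\rangle$, for all $i$. Applying the definition $f(O)=\sum_{k}f(o_{k})|o_{k}\rangle\langle o_{k}|$ of a function of a normal operator,
\begin{eqnarray}
f(\mathbb{I}_{s}\otimes\xi_{s'}) &=& \sum_{i,j}f(b_{j})\,|a_{i}\rangle\langle a_{i}|\otimes|b_{j}\rangle\langle b_{j}| \nonumber \\
&=& \Bigl(\sum_{i}|a_{i}\rangle\langle a_{i}|\Bigr)\otimes\Bigl(\sum_{j}f(b_{j})|b_{j}\rangle\langle b_{j}|\Bigr) \nonumber \\
&=& \mathbb{I}_{s}\otimes f(\xi_{s'}),
\end{eqnarray}
where in the second line we used that $f(b_{j})$ is independent of $i$, and in the last line the completeness relation for $\mathcal{H}_{s}$ together with the functional-calculus definition of $f(\xi_{s'})$. \qed
\end{proof}
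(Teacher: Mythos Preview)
Your proof is correct and follows essentially the same approach as the paper: both fix the eigen-decomposition $\xi_{s'}=\sum_{j}b_{j}|b_{j}\rangle\langle b_{j}|$ and the completeness relation $\mathbb{I}_{s}=\sum_{i}|a_{i}\rangle\langle a_{i}|$, apply the functional calculus to the product basis $\{|a_{i}\rangle\otimes|b_{j}\rangle\}$, and factor the resulting double sum. Your additional remark about degeneracy is a nice clarification not present in the paper, but it does not change the underlying argument.
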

\begin{proof}
 To prove this lemma we need only to use the eigen-decomposition of $\xi_{s'}=\sum_{i}b_{i}|b_{i}\rangle\langle b_{i}|$
 and the closure relation $\mathbb{I}_{s}=\sum_{i}|a_{i}\rangle\langle a_{i}|$ in order to write:
 \begin{eqnarray}
  f(\mathbb{I}_{s}\otimes\xi_{s'}) &=& \sum_{i,j}f(b_{j})|a_{i}\rangle\langle a_{i}|\otimes|b_{j}\rangle\langle b_{j}| \\
  &=& \sum_{i}|a_{i}\rangle\langle a_{i}|\otimes\sum_{j}f(b_{j})|b_{j}\rangle\langle b_{j}| \\
  &=& \mathbb{I}_{s}\otimes f(\xi_{s'}).
 \end{eqnarray}
 \qed
\end{proof}

\begin{lemma}
 Let $f:\mathbb{C}\rightarrow\mathbb{C}$ be any function from the complex numbers to the complex numbers,
 let $\chi_{ss'}\in\mathcal{H}_{ss'}$ be any global state for the systems $s$ and $s'$, and let $\xi_{s'}\in\mathcal{H}_{s'}$ 
 be any state for the system $s'$. Then it follows that
 \begin{equation}
 \mathrm{tr}(\chi_{ss'}f(\mathbb{I}_{s}\otimes\xi_{s'})) =  \mathrm{tr}_{s'}(\chi_{s'}f(\xi_{s'})),
 \end{equation}
 with $\chi_{s'}=\mathrm{tr}_{s}(\chi_{ss'})$.
 \label{lemma3}
\end{lemma}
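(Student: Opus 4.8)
The plan is to prove Lemma 3 by reducing it, via the previous two lemmas, to a straightforward computation using the eigen-decomposition of $\xi_{s'}$ together with the definition of the partial trace. First I would invoke Lemma \ref{lemma2} to rewrite $f(\mathbb{I}_{s}\otimes\xi_{s'})=\mathbb{I}_{s}\otimes f(\xi_{s'})$, so that the left-hand side becomes $\mathrm{tr}\bigl(\chi_{ss'}(\mathbb{I}_{s}\otimes f(\xi_{s'}))\bigr)$. The key structural fact I would then use is the standard identity relating a full trace against an operator of the form $\mathbb{I}_{s}\otimes A_{s'}$ to a trace over $s'$ only: namely $\mathrm{tr}\bigl(\chi_{ss'}(\mathbb{I}_{s}\otimes A_{s'})\bigr)=\mathrm{tr}_{s'}\bigl(\mathrm{tr}_{s}(\chi_{ss'})\,A_{s'}\bigr)=\mathrm{tr}_{s'}(\chi_{s'}A_{s'})$.

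Concretely, I would write the expansion $\xi_{s'}=\sum_{i}b_{i}|b_{i}\rangle\langle b_{i}|$ so that $f(\xi_{s'})=\sum_{i}f(b_{i})|b_{i}\rangle\langle b_{i}|$, and choose an orthonormal basis $\{|a_{j}\rangle\}$ for $\mathcal{H}_{s}$. Then
\begin{eqnarray}
\mathrm{tr}\bigl(\chi_{ss'}f(\mathbb{I}_{s}\otimes\xi_{s'})\bigr) &=& \sum_{i,j}f(b_{i})\langle a_{j}b_{i}|\chi_{ss'}|a_{j}b_{i}\rangle \nonumber \\
&=& \sum_{i}f(b_{i})\langle b_{i}|\Bigl(\sum_{j}\langle a_{j}|\chi_{ss'}|a_{j}\rangle\Bigr)|b_{i}\rangle \nonumber \\
&=& \sum_{i}f(b_{i})\langle b_{i}|\chi_{s'}|b_{i}\rangle \nonumber \\
&=& \mathrm{tr}_{s'}\bigl(\chi_{s'}f(\xi_{s'})\bigr),
\end{eqnarray}
where the third equality uses the definition $\chi_{s'}=\mathrm{tr}_{s}(\chi_{ss'})=\sum_{j}\langle a_{j}|\chi_{ss'}|a_{j}\rangle$ and the last uses the eigen-decomposition of $f(\xi_{s'})$ together with orthonormality of $\{|b_{i}\rangle\}$.

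I do not expect any serious obstacle here; the only point requiring a modicum of care is the manipulation in the middle step, where the sum over the $s$-index $j$ must be carried through the (partial) inner product to reconstruct the reduced operator $\chi_{s'}$ — this is legitimate because $f(\mathbb{I}_{s}\otimes\xi_{s'})$, after applying Lemma \ref{lemma2}, acts as the identity on $\mathcal{H}_{s}$, so nothing couples the $s$ and $s'$ factors. An alternative, entirely equivalent route would be to expand $\chi_{ss'}$ itself in a product basis and trace out $s$ first; either way the computation is short. This lemma, combined with Lemma \ref{lemma1}, then feeds directly into the proof of Eq. (\ref{QRE-equality}) by letting $f=\log_{2}$ and applying the result to each $\rho_{s}$ in turn inside the QRE.
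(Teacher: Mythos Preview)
Your proof is correct and follows essentially the same approach as the paper: both invoke Lemma~\ref{lemma2} to reduce to $\mathrm{tr}\bigl(\chi_{ss'}(\mathbb{I}_{s}\otimes f(\xi_{s'}))\bigr)$ and then work in a product basis to identify the partial trace $\chi_{s'}=\mathrm{tr}_{s}(\chi_{ss'})$. The only cosmetic difference is that the paper expands $\chi_{ss'}$ in the product basis first and then traces, whereas you compute the trace directly in that basis and collect the sum over $j$ into $\chi_{s'}$; you yourself note this alternative, and the two are indeed equivalent.
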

\begin{proof}
 First we use a basis $\{|a_{i}\rangle\otimes|b_{j}\rangle\}$ for $\mathcal{H}_{ss'}$ to write
 \begin{eqnarray}
  \chi_{ss'} &=& \mathbb{I}_{ss'} \chi_{ss'} \mathbb{I}_{ss'} = 
  \sum_{i,j,k,l} \chi_{ss'}^{ijkl}|a_{i}\rangle\langle a_{k}|\otimes|b_{j}\rangle\langle b_{l}|,
 \end{eqnarray}
where we defined $\chi_{ss'}^{ijkl} = \langle a_{i}|\otimes \langle b_{j}|\chi_{ss'}|a_{i}\rangle\otimes|b_{j}\rangle$. 
For this global state, the reduced density operator of system $s'$ is
\begin{eqnarray}
 \chi_{s'} &=& \mathrm{tr}_{s} (\chi_{ss'}) =\mathrm{tr}_{s}(\sum_{i,j,k,l} \chi_{ss'}^{ijkl}|a_{i}\rangle\langle a_{k}|\otimes|b_{j}\rangle\langle b_{l}|) \\
 &=& \sum_{i,j,k,l} \chi_{ss'}^{ijkl}\mathrm{tr}_{s}(|a_{i}\rangle\langle a_{k}|)|b_{j}\rangle\langle b_{l}| 
\end{eqnarray}
We now make use of these expressions and of Lemma \ref{lemma2} to get
\begin{eqnarray}
 && \mathrm{tr}(\chi_{ss'}f(\mathbb{I}_{s}\otimes\xi_{s'})) = \mathrm{tr}(\chi_{ss'}\mathbb{I}_{s}\otimes f(\xi_{s'})) \\
 && =  \mathrm{tr}_{s'}\mathrm{tr}_{s}(\sum_{i,j,k,l} \chi_{ss'}^{ijkl}|a_{i}\rangle\langle a_{k}|\otimes|b_{j}\rangle\langle b_{l}|f(\xi_{s'})) \\
 && =  \mathrm{tr}_{s'}(\sum_{i,j,k,l} \chi_{ss'}^{ijkl}\mathrm{tr}_{s}(|a_{i}\rangle\langle a_{k}|)|b_{j}\rangle\langle b_{l}|f(\xi_{s'})) \\
 && = \mathrm{tr}_{s'}(\chi_{s'}f(\xi_{s'})).
\end{eqnarray}
\qed
\end{proof}

Now we have the tools we need to prove the equality in Eq. (\ref{QRE-equality}).
\begin{proposition}
Let $\rho_{1\cdots n}\in\mathcal{D}(\mathcal{H}_{1\cdots n})$ be any $n$-partite state with marginals density operators $\rho_{s}$ 
for its subsystems $s=1,\cdots,n$. Let $\bigotimes_{s=1}^{n}\sigma_{s}$ be any $n$-partite product state. It follows that
\begin{equation}
S(\rho_{1\cdots n}||{\bigotimes_{s=1}^{n}}\sigma_{s})=S(\rho_{1\cdots n}||{\bigotimes_{s=1}^{n}}\rho_{s})
+{\sum_{s=1}^{n}}S(\rho_{s}||\sigma_{s}).
\end{equation}
\end{proposition}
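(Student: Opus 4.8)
The plan is to expand the quantum relative entropy on the left-hand side using its definition in Eq.~(\ref{eq:QRE}), namely $S(\rho\|\sigma)=\mathrm{tr}(\rho\log_{2}\rho)-\mathrm{tr}(\rho\log_{2}\sigma)$, and then to rewrite the cross term $-\mathrm{tr}\bigl(\rho_{1\cdots n}\log_{2}\bigotimes_{s=1}^{n}\sigma_{s}\bigr)$ by turning the logarithm of the product state into a sum of single-party logarithms. The first step is to observe that $\log_{2}(\rho_{1\cdots n}\log_{2}\rho_{1\cdots n})$ is common to both sides and cancels, so the identity reduces to the purely ``cross-term'' statement
\begin{equation}
\mathrm{tr}\Bigl(\rho_{1\cdots n}\log_{2}{\textstyle\bigotimes_{s=1}^{n}}\sigma_{s}\Bigr)
=\mathrm{tr}\Bigl(\rho_{1\cdots n}\log_{2}{\textstyle\bigotimes_{s=1}^{n}}\rho_{s}\Bigr)
+{\textstyle\sum_{s=1}^{n}}\bigl[\mathrm{tr}(\rho_{s}\log_{2}\rho_{s})-\mathrm{tr}(\rho_{s}\log_{2}\sigma_{s})\bigr].
\end{equation}

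The key step is the iterated use of Lemma~\ref{lemma1}: by induction on $n$ one gets $\log_{2}\bigl(\bigotimes_{s=1}^{n}\sigma_{s}\bigr)=\sum_{s=1}^{n}\mathbb{I}_{1}\otimes\cdots\otimes\log_{2}(\sigma_{s})\otimes\cdots\otimes\mathbb{I}_{n}$, i.e.\ the logarithm of a product operator is the sum of the ``local'' logarithms padded with identities on the remaining factors. Substituting this into $\mathrm{tr}\bigl(\rho_{1\cdots n}\log_{2}\bigotimes_{s}\sigma_{s}\bigr)$ and using linearity of the trace gives $\sum_{s=1}^{n}\mathrm{tr}\bigl(\rho_{1\cdots n}\,[\mathbb{I}_{\overline{s}}\otimes\log_{2}\sigma_{s}]\bigr)$. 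Now apply Lemma~\ref{lemma3} to each term — grouping all parties other than $s$ into a single ``$s$'' subsystem in the notation of that lemma — to reduce each summand to $\mathrm{tr}_{s}(\rho_{s}\log_{2}\sigma_{s})$, where $\rho_{s}=\mathrm{tr}_{\overline{s}}(\rho_{1\cdots n})$ is exactly the marginal. Doing the same computation with $\sigma_{s}$ replaced by $\rho_{s}$ yields $\mathrm{tr}\bigl(\rho_{1\cdots n}\log_{2}\bigotimes_{s}\rho_{s}\bigr)=\sum_{s}\mathrm{tr}(\rho_{s}\log_{2}\rho_{s})$, and assembling the pieces gives the claimed identity after trivial rearrangement.

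The main obstacle, such as it is, is bookkeeping rather than anything deep: one must be careful that Lemma~\ref{lemma3} is stated for a bipartition $s\,|\,s'$, so to invoke it for the $s$-th term I must regard $\overline{s}$ (all the other $n-1$ parties) as playing the role of the ``$s$'' subsystem of the lemma and the single party $s$ as the ``$s'$'' subsystem, and check that the padding identity $\mathbb{I}_{\overline{s}}\otimes\log_{2}\sigma_{s}$ matches the form $f(\mathbb{I}_{s}\otimes\xi_{s'})$ required there (with $f=\mathrm{id}$, using Lemma~\ref{lemma2} to handle the identity factor). Provided the tensor-factor orderings are tracked consistently — permuting factors inside a trace is harmless — the argument goes through with no analytic subtleties, since all operators are finite-dimensional and the density operators are full-support or the relative entropies are $+\infty$ and the identity holds trivially. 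I would therefore present it as: reduce to the cross terms, apply iterated Lemma~\ref{lemma1}, then Lemma~\ref{lemma3} termwise, and collect.
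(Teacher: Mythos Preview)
Your approach is correct and essentially identical to the paper's own proof: expand the relative entropy, apply Lemma~\ref{lemma1} iteratively to split $\log_{2}\bigotimes_{s}\sigma_{s}$ into a sum of padded local logarithms, then invoke Lemma~\ref{lemma3} termwise to reduce each trace to a marginal trace $\mathrm{tr}(\rho_{s}\log_{2}\sigma_{s})$, and reassemble. Two harmless typos to fix: ``$\log_{2}(\rho_{1\cdots n}\log_{2}\rho_{1\cdots n})$'' should read ``$\mathrm{tr}(\rho_{1\cdots n}\log_{2}\rho_{1\cdots n})$'', and the sign before the sum in your displayed cross-term identity is flipped (it should be $-$, or equivalently swap the two terms inside the bracket).
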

\begin{proof}
Let us start by using the definition of quantum relative entropy in Eq. (\ref{eq:QRE}) to write
\begin{equation}
 S(\rho_{1\cdots n}||\bigotimes_{i=1}^{n}\sigma_{i})=-S(\rho_{1\cdots n})-\mathrm{tr}(\rho_{1\cdots n}\log_{2}\bigotimes_{i=1}^{n}\sigma_{i}).
 \label{eq:REE1}
 \end{equation}

Utilizing Lemmas \ref{lemma1} and \ref{lemma3} we can express the last term on the right hand side of the last equation as
\begin{eqnarray}
&& \mathrm{tr}(\rho_{1\cdots n}\log_{2}\bigotimes_{i=1}^{n}\sigma_{i}) =
\mathrm{tr}(\rho_{1\cdots n}\log_{2}(\sigma_{1})\otimes\mathbb{I}_{2}\otimes\cdots\otimes\mathbb{I}_{n}) \nonumber \\ 
&& + \cdots + \mathrm{tr}(\rho_{1\cdots n}\mathbb{I}_{1}\otimes\cdots\otimes\mathbb{I}_{n-1}\otimes\log_2(\sigma_{n})) \\
&&=  \mathrm{tr}(\rho_{1}\log_{2}(\sigma_{1})) + \cdots + \mathrm{tr}(\rho_{n}\log_2(\sigma_{n})) \\
&& = \sum_{i=1}^{n}\mathrm{tr}(\rho_{i}\log_{2}\sigma_{i}).
\label{eq:REE2}
\end{eqnarray}

In a similar manner, we have
\begin{equation}
\mathrm{tr}(\rho_{1\cdots n}\log_{2}\bigotimes_{i=1}^{n}\rho_{i}) = \sum_{i=1}^{n}\mathrm{tr}(\rho_{i}\log_{2}\rho_{i}).
\end{equation}

Using these two relations we get
\begin{eqnarray}
&& S(\rho_{1\cdots n}||\bigotimes_{i=1}^{n}\sigma_{i})=-S(\rho_{1\cdots n})
-\mathrm{tr}(\rho_{1\cdots n}\log_{2}\bigotimes_{i=1}^{n}\rho_{i}) \nonumber \\
&& + \sum_{i=1}^{n}\mathrm{tr}(\rho_{i}\log_{2}\rho_{i})
-\sum_{i=1}^{n}\mathrm{tr}(\rho_{i}\log_{2}\sigma_{i}) \\
&& = S(\rho_{1\cdots n}||\bigotimes_{i=1}^{n}\rho_{i}) +\sum_{i=1}^{n}S(\rho_{i}||\sigma_{i}),
\end{eqnarray}
concluding thus the proof of the proposition. \qed
\end{proof}

We observe that the last equality can be expressed as
\begin{equation}
S(\rho_{1\cdots n}||\bigotimes_{i=1}^{n}\sigma_{i}) = S(\rho_{1\cdots n}||\bigotimes_{i=1}^{n}\rho_{i})
 + S(\bigotimes_{i=1}^{n}\rho_{i}||\bigotimes_{i=1}^{n}\sigma_{i}).
\end{equation} 
So, in this case, the triangle inequality
 is satisfied (and saturated) by quantum relative entropy.


\end{document}